\documentclass{jfrarticle}
\usepackage[utf8]{inputenc}
\usepackage[T1]{fontenc}
\usepackage{lmodern}
\usepackage{emptypage}
\usepackage{enumitem}
\if{
\newcommand{\myin}{1.3in}
\usepackage[
    top    = \myin,
    bottom = \myin,
    left   = \myin,
    right  = \myin]{geometry}
}\fi
\usepackage{setspace}
\usepackage[colorinlistoftodos,bordercolor=orange,backgroundcolor=orange!20,linecolor=orange,textsize=scriptsize]{todonotes}
\usepackage{amsmath}
\usepackage{amsfonts}
\usepackage{latexsym}
\usepackage{multirow}
\usepackage{hyperref}
\usepackage{xcolor}
\usepackage{mathtools, amssymb} 
\usepackage{phdalgo}
\usepackage{rotating}
\usepackage{listings}

\lstset{language = Ocaml}
\if{
\newtheorem{theorem}{Theorem}[chapter]
\newtheorem{lemma}[theorem]{Lemma}
\newtheorem{conjecture}[theorem]{Conjecture}

\newtheorem{property}[theorem]{Property}

\theoremstyle{definition}

\theoremstyle{remark}
\newtheorem{example}[theorem]{Example}
\newtheorem{remark}[theorem]{Remark}
}\fi
\newtheorem{theorem}{Theorem}
\newtheorem{conjecture}{Conjecture}

\newtheorem{lemma}[theorem]{Lemma}
\newdef{example}[theorem]{Example}
\newdef{remark}[theorem]{Remark}

\newcommand{\rational}[1]{{#1}^{\Q}}
\newcommand{\setA}{\mathcal{A}} 
\newcommand{\setD}{\mathcal{D}} 



\newcommand{\N}{\mathbb{N}}					
\newcommand{\Q}{\mathbb{Q}}					
\newcommand{\R}{\mathbb{R}}					
	

\newcommand{\etc}{\textit{etc}~}
\newcommand{\ie}{\textit{i.e.}~}
\newcommand{\eg}{e.g.~}

\newcommand{\resp}{resp.~}

\renewcommand{\geq}{\geqslant}
\renewcommand{\leq}{\leqslant}

\newcommand{\alphab}{\boldsymbol{\alpha}}

\newcommand{\fsa}{f_{\text{sa}}}
\newcommand{\fpop}{f_{\text{pop}}}
\newcommand{\epspop}{\epsilon_{\text{pop}}}

\newcommand{\Kpop}{K_{\text{pop}}}

\newcommand{\bop}{\mathtt{bop}}

\newcommand{\nbb}{\#{\text{boxes}}}

\numberwithin{equation}{section}

\DeclareMathOperator{\parab}{par}


\newcommand{\nlcertify}{\mathtt{NLCertify}}
\newcommand{\coq}{\text{\sc Coq}}

\newcommand{\sdpa}{\text{\sc sdpa}}
\newcommand{\hol}{\text{\sc Hol-light}}
\newcommand{\ocaml}{\text{\sc OCaml}}
\newcommand{\micromega}{\mathtt{micromega}}
\newcommand{\coqinterval}{\mathtt{interval}}



\newcommand{\astroot}{\mathtt{root}}

\newcommand{\composebop}{\mathtt{compose\_bop}}

\newcommand{\composeapprox}{\mathtt{compose\_approx}}




\newcommand{\buildpar}{\mathtt{build\_par}}


\newcommand{\templateapproxjfr}[3]{\mathtt{template\_approx} (#1, #2, #3)}
\newcommand{\templateoptim}{\mathtt{template\_optim}}
\newcommand{\templateapprox}{\mathtt{template\_approx}}

\newcommand{\minsa}{\mathtt{min\_sa}}
\newcommand{\maxsa}{\mathtt{max\_sa}}

\newcommand{\xb}{\mathbf{x}}

\newcommand{\zb}{\mathbf{z}}
 
\newcommand{\ab}{\mathbf{a}}
\newcommand{\bb}{\mathbf{b}}







\definecolor{darkgreen}{rgb}{0.0 0.5 0.0}
\definecolor{whitegreen}{rgb}{0.0 0.75 0.0}
\definecolor{whiteblue}{rgb}{0.0 0.0 1.2}
\definecolor{darkred}{rgb}{0.8 0.0 0.0}
\definecolor{dkblue}{rgb}{0,0.1,0.5}
\definecolor{lightblue}{rgb}{0,0.5,0.5}
\definecolor{dkgreen}{rgb}{0,0.4,0}
\definecolor{dk2green}{rgb}{0.4,0,0}
\definecolor{dkviolet}{rgb}{0.6,0,0.8}
\definecolor{dkpink}{rgb}{0.8,0,0.9}

\newcommand{\defcoq}{{{\tt:=}}}
\newcommand{\pluscoq}{{\tt+}}
\newcommand{\subcoq}{{\tt-}}

\newcommand{\eqcoq}{{\tt=}}
\newcommand{\evalexpr}[1]{[|#1|]_{\text{env}}}

\newcommand{\isos}{i_{\text{sos}}}
\newcommand{\sq}{\hat{\text{sq}}}
\newcommand{\itvsos}{itv}
\newcommand{\liftcoq}{\text{n}_{\text{lifting}}}

\newcommand{\iflst}{\if{\lstinline{$$}}\fi}
\newcommand{\code}[1]{\lstinline{#1}}


\title{Formal Proofs for Nonlinear Optimization}

\author{Victor Magron, Xavier Allamigeon, St\'ephane Gaubert and Benjamin Werner}
{VICTOR MAGRON\\Circuits and Systems Group,
Imperial College, London, UK\\
XAVIER ALLAMIGEON\\INRIA and CMAP, \'Ecole Polytechnique, CNRS, Palaiseau, France\\
ST\'EPHANE GAUBERT\\INRIA and CMAP, \'Ecole Polytechnique, CNRS, Palaiseau, France 
\and
BENJAMIN WERNER\\ LIX, \'Ecole Polytechnique, CNRS, Palaiseau, France}

\begin{abstract}
We present a formally verified global optimization framework. Given a semialgebraic or transcendental function $f$ and a compact semialgebraic domain $K$, we use the nonlinear maxplus template approximation algorithm to provide a certified lower bound of $f$ over $K$.
This method allows to bound 
in a modular way
some of the constituents of $f$ by suprema of quadratic forms with a well chosen curvature. Thus, we reduce the initial goal to a hierarchy of semialgebraic optimization problems, solved by sums of squares relaxations.  
Our implementation tool interleaves  semialgebraic approximations with sums of squares witnesses to form certificates. 
It is interfaced with {\sc Coq} and thus benefits from the trusted arithmetic available inside the proof assistant. 
 This feature is used to produce, from the certificates, both valid under-approximations and
lower bounds for each approximated constituent.
The application range for such a tool is widespread; for instance Hales' proof of Kepler's conjecture yields thousands of multivariate transcendental inequalities. We illustrate the performance of our formal framework on some of these inequalities as well as on 
examples from the global optimization literature.


\end{abstract}

\begin{document}
\begin{bottomstuff}
The research leading to these results has received funding from the European Union's $7^{\text{th}}$ Framework Programme under grant agreement no. 243847 (ForMath).
\permission
\copyright\ 2013 Journal of Formal Reasoning
\end{bottomstuff}
\maketitle

\section{Introduction}
\label{sect:intro}
\subsection[Problems Involving Computer Assisted Proofs]{Problems Involving Computer Assisted Proofs}
This work is about one particular combination of secure formal proofs with fast mechanical computations: we want the computer to automatically determine precise numerical bounds of algebraic expressions, while retaining the safety of formal proofs - in our case, the $\coq$ proof system.

On one hand, since their conception, computers have been used as fast calculators,
whose speed allows the mathematician to access knowledge which would
be out of reach without the machine. On the other hand, it is now common
practice since a long while to use the computer as a rigorous {\em
  censor}, which checks the validity of a mathematical development
down to its formal steps in a given logical formalism; this is the
task of {\em proof systems} since the 1960s. Combining 
these two tasks, as in this work, is more recent. 

The programming language provided inside the formalism of $\coq$ can
be used in sophisticated ways. In particular, it allows to build
decision procedures or perform automatized reasoning, thus 
 to prove classes of propositions in a systematic and efficient fashion.
Because this involves formalizing a fragment of the logic in $\coq$
itself, this technique is called {\em computational reflection}\index{computational reflection} and
was introduced in \cite{Barthe96atwo-level} (see also \cite{BM81} for details about reflection).

The fact that
complex computations can also take place inside the proof-system was
first used for some proof automation like the successive versions of
$\coq$'s ring tactic~\cite{Boutin97usingreflection,DBLP:conf/tphol/GregoireM05}   to check polynomial equalities
(actually we happen to use the current version of this tactic in the present work). Other applications include verifying large numbers' primality~\cite{pocklington}, checking witnesses from SAT/SMT solvers~\cite{smtcoq} or hardware verification~\cite{Pau96}.

Recently, proof-checkers embedded with computational features were
particularly highlighted by allowing the formal checking of results whose proofs are fundamentally computational, like the four-color theorem~\cite{Gonthier:2008:FCT:1425750.1425783} or Kepler's conjecture.

Kepler's conjecture is one of the eventual motivations of the present work. It can be stated as follows:

\begin{conjecture}[\theconjecture\ (Kepler 1611)]
\label{conj:kepler}
The maximal density of sphere packings in three dimensional space is $\pi / \sqrt{18}$.
\end{conjecture}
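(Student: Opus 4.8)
The plan is to follow the strategy of Hales, reducing the infinite-dimensional packing problem to a finite collection of nonlinear inequalities that our framework can certify. First I would observe that the density of a packing is a limiting average of purely local quantities, so it suffices to bound a \emph{local} density attached to each sphere. To make this precise I would fix a saturated packing, decompose space around each sphere center using a hybrid Voronoi--Delaunay decomposition, and assign to the resulting cluster of neighbouring spheres a combinatorial-geometric object, the \emph{decomposition star}. The space of decomposition stars is compact, and the global density bound $\pi / \sqrt{18}$ follows once one shows that a suitable \emph{score} function, summed over the stars, never exceeds the value attained by the face-centered cubic (and hexagonal close) configurations.

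Next I would express the score in terms of a small set of geometric primitives --- the dihedral angle $\dih$, the solid angle $\sol$, and the derived penalty function $\taum$ --- each of which is transcendental, involving $\arctan$ and square roots of polynomial arguments. The claim that the score is maximized at the close-packed configurations is combinatorial in nature: one enumerates the finitely many \emph{tame} planar graphs that can arise as the contact graph of a potentially optimal star, and for each one reduces the bound to an inequality of the form $F(\xb) \leq c$ over a compact semialgebraic domain $K \subset \R^n$, where $F$ is assembled from $\dih$, $\sol$ and $\taum$. This yields thousands of multivariate transcendental inequalities, exactly of the kind advertised in the abstract.

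The final and decisive step is to discharge each such inequality rigorously. Here I would invoke the maxplus template approximation of the paper: bound every transcendental constituent of $F$ from below (or above) by a supremum of quadratic forms with well-chosen curvature, thereby replacing $F$ by a semialgebraic minorant, and then certify the resulting semialgebraic optimization problem by a sums-of-squares relaxation. Feeding the SOS witnesses and the under-approximations through the $\coq$ interface produces, for each inequality, a formally checked certificate, so that the union of certificates constitutes a machine-verified proof of the density bound.

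The hard part is twofold. On the mathematical side, the combinatorial reduction --- the classification of tame graphs and the argument that no untame configuration beats close packing --- is itself a vast development and is largely orthogonal to the optimization machinery developed here. On the certification side, the genuine obstacle is that near the optimum the inequalities are \emph{tight}: the functions come within a minuscule margin of their bound at the close-packed points, so crude interval enclosures diverge, and the quadratic templates must be curved precisely enough to separate $F$ from $c$ without destroying the validity of the SOS decomposition. Managing this tension between approximation tightness and certificate size, across thousands of instances and inside a proof assistant, is where the main effort --- and the contribution of this work --- lies.
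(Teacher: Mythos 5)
The paper does not prove this statement; it records it as a conjecture and then quotes its resolution as a theorem of Hales, with citations to Hales' 1994 announcement and 2005 \textit{Annals} proof. The present framework contributes only to one ingredient of the Flyspeck formalization of that proof, namely certifying \emph{some} of the nonlinear inequalities. Your proposal is a faithful high-level description of Hales' architecture (local density via decomposition stars, a score function built from $\dih$, $\sol$, $\taum$, classification of tame graphs, reduction to finitely many transcendental inequalities), but as a proof it has a genuine gap: every decisive step is invoked by name rather than established. The compactness of the space of decomposition stars, the correctness of the scoring reduction, and above all the enumeration and elimination of tame graphs are exactly the ``vast development'' you yourself concede is ``largely orthogonal to the optimization machinery developed here.'' A proof cannot defer its core to an unproved orthogonal development; what remains after that concession is a plan, not an argument.

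The second gap is quantitative: your final step asserts that feeding all of the thousands of inequalities through the maxplus/SOS/$\coq$ pipeline ``constitutes a machine-verified proof of the density bound,'' but the paper demonstrates formal verification for exactly two six-variable Flyspeck inequalities ($9922699028$ and $3318775219$, Table~\ref{table:formalfly}) plus small test problems, and explicitly states that the full set of Flyspeck inequalities is currently verified (in about 5000 CPU hours) by the Hales--Solovyev Taylor-interval tools in \textsc{Hol-light}, not by this framework. Moreover, Hales' proof also requires certified planar-graph generation and linear-programming bounds, which lie entirely outside the scope of this paper's machinery. So even granting the combinatorial reduction, the claim that this paper's framework discharges the remaining obligations is not supported by what the paper establishes; it is a conjectured extension, which is precisely why the paper states Kepler's result as a cited theorem rather than proving it.
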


This conjecture has been proved by Thomas Hales\footnote{\url{https://code.google.com/p/flyspeck/wiki/AnnouncingCompletion}}. 

\begin{theorem}[\thetheorem\ (Hales~\cite{Hales94aproof,halesannmath})]
Kepler's conjecture is true.
\end{theorem}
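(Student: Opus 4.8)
The plan is to reduce the global, infinite-dimensional optimization over all sphere packings to a finite collection of local nonlinear inequalities, each amenable to rigorous certification of the kind developed in this paper. First I would normalize to packings of unit balls and \emph{localize} the density: rather than reasoning about the asymptotic density of an infinite packing directly, I would assign to each ball a local cell obtained from a Voronoi-type or hybrid Delaunay decomposition of a neighborhood of the ball's center, together with a scalar \emph{score} measuring the local packing quality. The construction must be arranged so that the global density bound $\pi / \sqrt{18}$ follows from a uniform upper bound on the score of every admissible local configuration, with equality attained exactly at the face-centered cubic and hexagonal-close configurations.

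Second, I would show that the score of a local configuration depends only on the finitely many balls lying within a bounded distance of the central ball, so that the scoring problem becomes a finite-dimensional optimization over the geometric parameters of a bounded cluster (edge lengths together with angular quantities such as $\dih$, $\sol$ and $\taum$). The continuum of admissible clusters is then organized combinatorially: I would enumerate the finitely many possible contact/decomposition graph types---the \emph{tame} planar graphs---that can occur, discarding by geometric arguments those types that cannot be extremal.

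Third, for each surviving combinatorial type the claim that its score cannot exceed the target reduces to a system of multivariate inequalities in the cluster parameters. Many of these involve transcendental constituents (the arctangents and square roots entering the dihedral and solid angle formulas), so I would certify them with the maxplus template and sums-of-squares framework developed above: bound each transcendental constituent by a supremum of quadratic forms with well chosen curvature, relax the resulting goal to a hierarchy of semialgebraic problems, and extract formally checked lower bounds inside $\coq$.

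The hard part will be twofold. The classification of tame graphs is combinatorially vast and demands an exhaustive, independently verified enumeration, while the passage from local scores to the global bound rests on large linear-programming certificates that must themselves be validated. Harder still, several of the decisive inequalities are nearly tight at the optimal configuration, so their certificates must be produced at high precision and with carefully adapted templates, lest the relaxation gap exceed the available margin; controlling this gap uniformly across the thousands of inequalities, rather than settling any single one of them, is the genuine obstacle.
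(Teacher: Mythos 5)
This statement is not proved in the paper at all: it is quoted as Hales' theorem, with the proof residing in the cited works \cite{Hales94aproof,halesannmath}; the paper's own contribution is only to formally re-verify a small sample of the nonlinear inequalities arising in that proof. Your outline is, broadly, a faithful sketch of Hales' actual strategy---localizing the density via a hybrid Voronoi/Delaunay decomposition and a scoring function, reducing to finite clusters, classifying the tame planar graphs, and discharging each combinatorial type by linear-programming certificates together with nonlinear inequalities---so as a summary of the cited proof it is on target.

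There is, however, a genuine gap in your third step. You propose to certify the transcendental inequalities with the maxplus template and sums-of-squares framework ``developed above,'' but that is not how the proof, or even its formalization, goes: Hales' original verification used interval arithmetic with Taylor approximations implemented in C++, and the Flyspeck formal verification uses formal Taylor interval approximations in $\hol$ \cite{DBLP:journals/corr/abs-1301-1702}, at a cost the paper reports as roughly 5000 CPU hours. The present paper demonstrates its framework on exactly two of the Flyspeck inequalities ($9922699028$ and $3318775219$) plus a few polynomial subproblems, and explicitly defers the evaluation on \emph{all} Flyspeck inequalities to future work. Treating the framework as an engine that can uniformly absorb thousands of near-tight inequalities---with relaxation orders, control points, and branch-and-bound subdivisions chosen so that every relaxation gap stays within margin---is precisely what has not been established: in the paper's own benchmarks a single inequality already requires tens to hundreds of domain subdivisions and hours of $\coq$ checking. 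Your plan therefore names the right skeleton, but it rests the decisive quantitative step on machinery whose adequacy at that scale is an open question rather than a theorem, and it leaves the two other monumental components (the exhaustive tame-graph enumeration and the validation of the linear-programming bounds) as named tasks rather than arguments.
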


One of the chapters of~\cite{halesannmath} is coauthored by Ferguson.
The publication of the proof, one of the ``most complicated [...] that has been ever produced'', to quote his author\footnote{\url{https://code.google.com/p/flyspeck/wiki/FlyspeckFactSheet}}, took several years and its verification
required ``unprecedented'' efforts by a team of referees; the difficulty being made worse by the use of mechanical computations interwound with mathematical deductions. The degree of complexity of such a checking has motivated the effort to fully formalize them.

Like the four-color theorem's proof, Hales' proofs thus combines ``conventional'' mathematical deduction and non-trivial computations. The formalization of this development is an ambitious goal addressed by the Flyspeck project, launched by Hales himself~\cite{hales:DSP:2006:432}. 
Note that other problems can be solved by proof assistants but do not rely on mechanical  computation. As an example, one can mention the formal proof of the  Feit-Thompson Odd Order Theorem~\cite{oddorderth}. Flyspeck also involves the formalization of many mathematical  concepts and proofs; but we here do not deal with the ``conventional'' mathematical part of the project.

\subsection{Nonlinear Inequalities} 
\label{subsec:introfly}

\if{Computations are mandatory for at least three kind of tasks in Hales' proof: handling the generation of some special planar graphs, producing bounds for linear expressions, and for non-linear expressions.
\todo[inline]{BW : est-ce-que cette formulation est correcte ? Alternative possible : \dots planar graphs, use of linear programming, and bounding of non-linear expressions.}
}\fi
Computations are mandatory for at least three kind of tasks in Hales' proof: generation of planar graphs, use of linear programming, and bounding of non-linear expressions.
Details about the two former issues are available in Solovyev's doctoral dissertation~\cite{pittir16721}.

We here focus on the last issue, namely the formal checking of the correctness of hundreds of nonlinear inequalities.
Each of these cases boils down to the computation of a certified lower bound for a real-valued multivariate
function $f : \R^n \to \R$ over a compact semialgebraic set
$K \subset \R^n$:
\begin{equation}
\label{eq:f}
f^*  :=  \inf_{\xb \in K} f (\xb) \enspace,
\end{equation}
In some cases, $f$ will be a multivariate
polynomial (polynomial optimization problems (POP)); alternatively, $f$ may belong to the
algebra $\setA$ of semialgebraic functions, which extends
multivariate polynomials, obtained through arbitrary compositions of
$(\cdot)^{p}$, $(\cdot)^{\frac{1}{p}} (p \in \N_{>0})$, 
$\lvert\cdot\rvert$, $+$, $-$, $\times$, $/$, $\sup(\cdot,\cdot)$,
$\inf(\cdot,\cdot)$ (semialgebraic optimization problems);
finally, in the most general case, $f$ may, in addition, involve transcendental functions ($\sin$, $\arctan$, \etc).

Our aim is twofold:
\begin{itemize}
\item The first is {\em automation}: we want to design a method that finds sufficiently precise lower bounds for all, or at least a majority of the functions $f$ and domains $K$ occurring in the proof.
\item The second, as already stressed above, is {\em certification}; meaning that the correctness of each of these bounds must be, eventually, formally provable in a proof system such as $\coq$.
\end{itemize}

An additional crucial point is {\em precision}. Especially the inequalities of Hales' proof are essentially tight.


However, the application range for formal bounds reaches over many areas, way beyond the proof of Kepler Conjecture. Hence, we are also keen on tackling {\em scalability} issues, which arise when one wants to provide coarser lower bounds for optimization problems with a larger number of variables or polynomial inequalities of a higher degree. 

\subsection{Context}
There have been a number of related efforts to obtain formal proofs for global optimization.

Lower bounds for POP can be obtained by solving sums of squares (SOS) programs
using the numerical output of specialized semidefinite programming (SDP) software~\cite{henrion:hal-00172442}. 
Such techniques rely on hybrid symbolic-numeric
certification methods, see Peyrl and Parrilo~\cite{DBLP:journals/tcs/PeyrlP08} and Kaltofen et al.~\cite{KLYZ09}, 
which in turn allow to produce nonnegativity certificates which can be checked in proof assistants.
Related formal frameworks include a decision procedure in $\coq$, described in~\cite{Besson:2006:FRA:1789277.1789281} as well as in $\hol$~\cite{harrison-sos}. Both procedures include a proof-search step to find nonnegativity certificates, which relies on the same $\ocaml$ libraries. The procedure in~\cite{Besson:2006:FRA:1789277.1789281} is implemented as a tactic called \code{micromega}.

Alternative approaches to SOS are based on formalizing multivariate Bernstein polynomials. This research has been carried out in the thesis of R. Zumkeller~\cite{Zumkeller:2008:Thesis} and by Mun\~oz and Narkawicz~\cite{BernsteinPVS} in PVS~\cite{cade92-pvs}.

Such polynomial optimization methods can be extended to transcendental functions using multivariate polynomial approximation through a semialgebraic relaxation. 
This requires to 
be able to also certify the approximation error in order to conclude. MetiTarski~\cite{Akbarpour:2010:MAT:1731476.1731498} is a theorem prover that can handle nonlinear inequalities involving special functions such as $\ln$, $\cos$, etc. These univariate transcendental functions (as well as the square root) are approximated by a hierarchy of approximations which are rational functions derived from Taylor or continued fractions expansions (for more details, see Cuyt et al.~\cite{cuyt2008handbook}).

The Flyspeck project also employed specific methods to verify nonlinear inequalities. Hales and Solovyev developed a nonlinear verification framework~\cite{pittir16721,DBLP:journals/corr/abs-1301-1702}, mixing $\ocaml$ and $\hol$~\cite{DBLP:conf/fmcad/Harrison96} procedures  to achieve formal Taylor interval approximations. A part of this procedure is informal and aims to provide useful hints such as an appropriate subdivision of the nonlinear inequality box $K$. The formal part of the procedure uses formalization results related to the multivariate Taylor Theorem (e.g. multivariate Taylor formula with second-order error terms) and formal interval arithmetic. Numerical computations with finite precision floating-point numbers are done in a formal setting within $\hol$, thanks to a careful representation of natural numerals over arbitrary bases (see~\cite{DBLP:journals/corr/abs-1301-1702} for more details). This formal framework is about 2000$\sim$4000 times slower than an informal procedure (written in C++) performing the same verification. 

 In~\cite{Chevillard11cheb}, the authors present a scheme amenable to formalization, which provides certified polynomial approximations of univariate transcendental functions. An upper bound of the approximation error is obtained by using a second approximation polynomial with bounded approximation, error relying on a non-negativity test performed by means of univariate sums of squares.
The Flocq  library~\cite{BM11Flocq} formalizes floating-point arithmetic inside $\coq$. The tactic $\coqinterval$~\cite{Melquiond201214}, built on top of Flocq, can simplify inequalities on expressions of real numbers. Our formal framework relies on this tactic for handling univariate transcendental functions.
However, inequalities involving multivariate transcendental functions remain typically
difficult to solve with interval arithmetic, in particular due to the
correlation between arguments of unary functions (e.g. $\sin, \arctan$)
or binary operations (e.g. $ +, -, \times, /$).

Currently, it takes about 5000 CPU hours to verify all the nonlinear inequalities with formal Taylor interval approximations (developed by the Flyspeck project) in $\hol$. One motivation of the present work is to reduce this total verification time using alternative formal methods.
In a previous work, the authors developed an informal framework, built on top of the certified --template based -- global optimization method~\cite{victormpb,victorcicm, victorecc}. The nonlinear template method is a certification framework, aiming at handling the approximation of transcendental functions and increasing the size of certifiable instances. It combines the ideas of maxplus approximations~\cite{a5,a6} and linear templates~\cite{Sankaranarayana+others/05/Scalable} to reduce the complexity of the semialgebraic approximations. Given a multivariate transcendental function $f$ and a semialgebraic compact set $K$, one builds lower semialgebraic approximations of $f$ using maxplus approximations (usually a supremum of quadratic forms) choosing a set of control points. In this way, the nonlinear template algorithm builds a hierarchy of semialgebraic relaxations that are solved with SDP.

\subsection{Contributions}
In this article, we present a formal framework, built on top of this informal method. 

The correctness of the bounds for semialgebraic optimization problems can be verified using the interface of this algorithm with the $\coq$ proof assistant. Thus, the certificate search and the proof checking inside $\coq$ are separated, which is common in the so-called {\em sceptical} approach~\cite{Barendregt02}. There are some more practical difficulties however.
When solving semialgebraic optimization problems (e.g. POP), the sums
of squares certificates produced by existing tools do not exactly match with the system of
polynomial inequalities defining $K$, because these external tools use limited precision floating point numbers and are thus prone to rounding errors. A
certified upper bound of this error is obtained inside the proof
assistant. Once the bounding of the error is obtained, the
verification of the certificate is performed through an equality check
in the ring of polynomials whose coefficients are arbitrary-size
rationals. This means that we benefit from efficient arithmetic of these coefficients: the recent implementation of functional modular arithmetic allows to handle arbitrary-size natural numbers~\cite{GT:ijcar06}. Spiwack~\cite{spiwack2006} has modified the virtual machine to handle 31-bits integers natively, so that arithmetic operations are delegated to the CPU. These recent developments made possible to deal with cpu-intensive tasks such as handling the proof checking of SAT traces~\cite{armand2010}. Here, it allows to check efficiently the correctness of SOS certificates.
Furthermore, this verification for SDP relaxations is combined to deduce the correctness of semialgebraic optimization procedures, which requires in particular to assert that the semialgebraic functions are well-defined. It allows to handle more complex certificates for non-polynomial problems.
Finally, the datatype structure of these certificates allows to reconstruct the steps of the nonlinear template optimization algorithm.

The present paper is a followup of~\cite{victorcicm} in which the idea of maxplus approximation of transcendental functions was improved through the use of template abstractions. Here, we develop and detail the formal side of this approach. The present framework provides an automated decision procedure to obtain formal bounds for polynomial and semialgebraic functions over semialgebraic sets. This formalization is associated with the development of $\coq$ libraries within the software package~\texttt{NLCertify} (see~\cite{icms14} as well as the software web-page\footnote{\url{http://nl-certify.forge.ocamlcore.org/}}) and complements the existing libraries of the software, originally written in $\ocaml$.

The paper is organized as follows. 
Section~\ref{sect:formalpop} is devoted to formal polynomial optimization.
We recall some properties of SDP relaxations for
polynomial problems (Section~\ref{subsect:certpop}). In
Section~\ref{subsect:hybrid}, we outline the conversion of the numerical
SOS produced by the SDP solvers into an exact rational
certificate. Section~\ref{subsect:coqpop} describes the formal verification of this certificate inside $\coq$.
Section~\ref{sect:formalfsa} explains  how to reduce semialgebraic problems to POP through the Lasserre-Putinar lifting. The structure of the interval enclosure certificates for semialgebraic functions is described in Section~\ref{sect:coqfsa}.
We remind the principle of the nonlinear maxplus template method in Section~\ref{subsect:template}. The interface between this algorithm and the formal framework is presented in Section~\ref{subsect:interface}.
Finally, we demonstrate the scalability of our formal method by certifying bounds of 
non-linear problems from the global optimization literature as well as
non trivial inequalities issued from the Flyspeck project.

\if{
The max-plus approximation, and the main algorithm based on the non-linear templates method are presented in Section~\ref{sect:template}. 
Numerical results are presented in Section~\ref{sect:bench}. We demonstrate the scalability of our approach by certifying bounds of 
non-linear problems involving up to $10^3$ variables, as well as
non trivial inequalities issued from the Flyspeck project.
}\fi

\section{Formal Polynomial Optimization}
\label{sect:formalpop}

We consider the general constrained polynomial optimization problem (POP):
\begin{equation}
\label{eq:cons_pop}
\fpop^*  :=  \inf_{\xb \in \Kpop} \fpop (\xb),
\end{equation}
where $\fpop : \R^n \to \R$ is a $d$-degree multivariate polynomial,
$\Kpop$ is a semialgebraic compact set defined by inequality
constraints $g_1(\xb) \geq 0, \dots, g_m(\xb) \geq 0$,  where
$g_j(\xb) : \R^n \to \R$ is each time a real-valued polynomial.
Recall that a $d$-degree multivariate polynomial $p$ can be decomposed as $p(\xb) = \sum_{|\alphab| \leq d} p_{\alphab} \xb^{\alphab}$, where each $\alphab$ is a nonnegative integer vector $(\alpha_1,\dots, \alpha_n)$, with $|\alphab| := \sum_{i=1}^n \alpha_i$.

Since the domain $\Kpop$ is compact, we know that it is included in some
box, say  $[\ab, \bb] := [a_1, b_1] \times \dots \times [a_n, b_n]
\subset \R^n$. We can thus assume, without loss of generality, that
the first constraints are precisely box constraints; more precisely,
that $m \geq 2 n$ and $g_1 := x_1 - a_1, g_2 := b_1 - x_1, \dots,~g_{2 n -1} := x_n - a_n, ~g_{2 n} := b_n - x_n$. 
In practice, such bounds $[\ab, \bb]$ are known in advance for all Flyspeck inequalities as well as for other global optimization problems which have been considered here.

Recall that the {\em set of feasible points} of an optimization
problem is simply the domain over which the optimum is taken,
i.e., here, $\Kpop$.

\subsection{Certified Polynomial Optimization using SDP Relaxations} 
\label{subsect:certpop}

Here, we remind how to cast a POP into an SOS program, which can be in turn written as an SDP. We define the
set of polynomials which can be written as a sum of squares $\Sigma[\xb] := \Bigl\{\,\sum_i q_i^2, \, \text{ with } q_i \in \R[\xb] \Bigr\}$. We set $g_0 := 1$ and take $k \geq k_0 := \max(\lceil d / 2 \rceil , \lceil \deg g_1 / 2 \rceil, \dots, \lceil \deg g_m / 2 \rceil)$.
Then, we consider the following
hierarchy of SDP relaxations for Problem~\eqref{eq:cons_pop},
consisting of the optimization problems $Q_k$ over the variables $(\mu, \sigma_0, \dots, \sigma_m)$:
\[
Q_k:\left\{			
\begin{array}{ll}
\sup_{\mu, \sigma_0, \dots, \sigma_m} & \mu \\			 
\text{s.t.} & \fpop (\xb) - \mu = \sum_{j = 0}^m \sigma_j(\xb) g_j(\xb), \ \forall \xb \in \R^n, \enspace \\
& \mu\in \R, \ \sigma_j \in \Sigma[\xb], \deg(\sigma_j g_j) \leq 2 k, \  j = 0,\dots,m \enspace.
\end{array} \right. \]
The integer $k$ is called the {\em SDP relaxation order} and  $\sup
(Q_k)$ is the optimal value of $Q_k$. A feasible point $(\mu_k,\sigma_0,\dots,\sigma_m)$ of Problem $Q_k$ is said to be an {\em SOS certificate},
showing the implication $g_1(\xb)\geq 0,\dots,g_m(\xb) \geq 0 
\implies \fpop(\xb)\geq \mu_k$. 
We also use the term  {\it Putinar-type certificate} since its
existence comes from the representation theorem of positive
polynomials by Putinar~\cite{putinar1993positive}. 

The sequence of optimal values $(\sup (Q_k))_{k \geq k_0}$ is monotonically increasing.
Las\-serre showed \cite{DBLP:journals/siamjo/Lasserre01} 
that it does converge to $\fpop^*$ under an additional assumption
on the polynomials $g_1,\dots,g_m$ (see~\cite{DBLP:journals/siamjo/Schweighofer05} for more details).  
One way to ensure that this assumption is automatically satisfied is to normalize and index the box inequalities as follows (corresponding to the affine transformation $x_i \mapsto (x_i - a_i) / b_i, i = 1, \dots, n$):
\begin{equation}
\label{eq:norm_box}
g_1 (\xb) := x_1, \: g_2 (\xb) := 1 - x_1, \dots, g_{2 n - 1} (\xb) := x_n, \: g_{2 n } (\xb) := 1 - x_n \enspace, 
\end{equation}
then to add the redundant constraint $n - \sum_{j = 1}^n {x_j^2} \geq 0$ to the set of constraints. For the sake of simplicity, we assume that the inequality constraints of Problem\eqref{eq:cons_pop} satisfy both conditions.

Also note that our current implementation allows to compute lower bounds for POP more efficiently by using a sparse refinement of the hierarchy of SDP relaxations $(Q_k)$ (see~\cite{Waki06sumsof} for more details).

\subsection{Hybrid Symbolic-Numeric Certification}
\label{subsect:hybrid}
The general scheme is thus quite clear: an external tool, here acting
as an oracle, computes the
certificate $(\mu_k, \sigma_0,\dots,\sigma_m)$ and the formal proof
essentially boils down to checking the equality 
\[
\fpop (\xb) - \mu_k = \sum_{j = 0}^m \sigma_j(\xb) g_j(\xb) \enspace
\]
and $\coq$'s ring tactic can typically verify such equalities.

There are practical difficulties however. 
In practice, we solve the relaxations $Q_k$ using SDP solvers (\eg{}$\sdpa$ \cite{YaFuNaNaFuKoGo:10}). Unfortunately, such
solvers are implemented using floating-point arithmetic and the solution $(\mu_k, \sigma_0,\dots,\sigma_m)$  satisfies only approximately the equality constraint in $Q_k$:
\[
\fpop (\xb) - \mu_k \simeq \sum_{j = 0}^m \sigma_j(\xb) g_j(\xb) \enspace.
\]

More precisely, the optimization problems are formalized in $\coq$ by
using rational numbers for the coefficients. In any case, we need to
deal with this approximation error.

An elaborate method would be to obtain exact certificates, for
instance by the rationalization scheme (rounding and projection
algorithm) developed by Peyrl and
Parrilo~\cite{DBLP:journals/tcs/PeyrlP08}, with an improvement of
Kaltofen et al.~\cite{KLYZ09}.  Let us note $ \theta_k := \Vert \fpop
(\xb) - \mu_k - \sum_{j = 0}^m \sigma_j(\xb) g_j(\xb) \Vert$ the error
for the problem $Q_k$.  The method of Kaltofen et al. ~\cite{KLYZ09}
consists in applying first Gauss-Newton iterations to refine the
approximate SOS certificate, until $\theta_k$ is less than a given
tolerance and then, to apply the algorithm
of~\cite{DBLP:journals/tcs/PeyrlP08}.  The number $\mu_k$ is
approximated by a nearby rational number $\rational{\mu_k} \lessapprox
\mu_k$ and the approximate SOS certificate $(\sigma_0, \dots
,\sigma_m)$ is converted to a rational SOS (for more details,
see~\cite{DBLP:journals/tcs/PeyrlP08}).  Then the refined SOS is
projected orthogonally to to the set of rational SOS certificates
$(\rational{\mu_k},\rational{\sigma_0}, \dots,\rational{\sigma_m})$,
which satisfy (exactly) the equality constraint in $Q_k$.  This can be
done by solving a least squares problem,
see~\cite{DBLP:journals/tcs/PeyrlP08} for more information.  Note that
when the SOS formulation of the polynomial optimization problem is not
strictly feasible, then the rounding and projection algorithm may
fail. However, Monniaux and Corbineau proposed a partial workaround
for this issue~\cite{Monniaux_Corbineau_ITP2011}.  In this way, except
in degenerate situations, we arrive at a candidate SOS certificate
with rational coefficients, $(\rational{\mu_k},\rational{\sigma_0},
\dots,\rational{\sigma_m})$ from the floating point solution of
$(Q_k)$.

In our case, we do not use the rounding and projection algorithm of
Peyrl and Parrilo; instead we rely on a simpler and cruder
scheme. We perform a certain number of operations before handing over
the certificate to $\coq$. 

In practice, the SDP solvers solve an optimization problem (equivalent to $Q_k$) over symmetric matrix variables $Z_0, \dots, Z_m$. From any floating point solution of this equivalent problem, one can extract the vectors $\mathbf{v}_{ij}$ of $Z_j$ with the associated $r_j$ coefficients $(\lambda_{ij})_{1 \leq i \leq r_j}$. Let $v_{i j}$ be the polynomial with vector coefficient $\mathbf{v}_{ij}$. Then, one has the following decomposition:
\begin{equation}
\label{eq:sosdecomp}
\sigma_j (\xb) = \sum_{i = 1}^{r_j} \lambda_{i j} v_{i j}^2(\xb), j = 0, \dots, m \enspace.
\end{equation}
The extraction is done with the {\sc Lacaml} (Linear Algebra with $\ocaml$) library, implementing the {\sc Blas}/{\sc Lapack}-interface.
The floating-point numbers of the generated certificate are viewed as rationals through the
straightforward mapping. Numerical SOS certificates are converted into rational SOS using the function $\mathtt{Q.of\_float}$ of the {\sc Zarith} $\ocaml$ library, which implements arithmetic and logical operations over arbitrary-precision integers. The floating-point value $\mu_k$ is also converted into a rational.

Then, we compute the (exact) error polynomial:
$$\epspop(\xb) := \fpop(\xb) - {\mu}_k -  \sum_{j = 0}^m  {\sigma}_j(\xb) g_j(\xb) \enspace .$$ 
Now we explain how to provide another, hopefully small, bound for $\epspop$. Note that this polynomial can be decomposed as 
$\epspop(\xb) = \sum_{|\alphab| \leq 2 k} \epsilon_{\alphab} \xb^{\alphab}$. Fortunately, the coefficients of this polynomial are
generally small, which allows us to choose  
$$\epspop^*  := \sum_{\epsilon_{\alphab} \leq 0} \epsilon_{\alphab}.$$
Indeed, the box inequalities guaranty that for each $\xb \in [0, 1]^n$:
\begin{equation}
\label{eq:epspop}
\epspop (\xb) =\sum_{|\alphab| \leq 2 k} \epsilon_{\alphab} \xb^{\alphab}
 ~\leq ~ \sum_{\epsilon_{\alphab} \leq 0} \epsilon_{\alphab} = \epspop^* \enspace.
\end{equation}

Finally, we compute the actual exact bound given by the certificate:
$\mu_k^- := \mu_k + \epspop^*$.
We see that  $\mu_k^- := \mu_k + \epspop^*$ is a valid lower bound of $\fpop$ over the domain $\Kpop$.

Note that we could optimize the polynomial $\epspop$ over $[0, 1]^n$, but it would be as hard as solving the initial POP. Moreover, one would have to consider again some residual polynomial after solving the corresponding SDP relaxation.

\subsection{A Formal Checker for Polynomial Systems}
\label{subsect:coqpop}
Following the procedure described in Section~\ref{subsect:hybrid}, we extract a rational certificate $(\mu_k^-, \sigma_0,\dots, \sigma_m, \epspop)$. 

By definition, this certificate satisfies the following, for all $\xb \in [0, 1]^n$:
\begin{equation}
\label{eq:ringpop}
 \fpop(\xb) - \mu_k^- =   \sum_{j = 0}^m  \sigma_j(\xb) g_j(\xb) + (\epspop(\xb) - \epspop^*)\enspace.
\end{equation}

The procedure which checks the equality~\eqref{eq:ringpop} between
polynomials and SOS inside $\coq$ relies on  computational
  reflexion. We use the reflexive \code{ring} tactic, by using a so-called ``customized'' polynomial ring \footnote{\url{http://coq.inria.fr/refman/Reference-Manual029.html}}. 
Given two polynomials $p$ and $q$, this tactic verifies the polynomial equality ``$p=q$'' in two steps. The first step is a normalization of both $p$ and $q$ w.r.t. associativity, commutativity and distributivity, constant propagation and rewriting of monomials. The second step consists in comparing syntactically the results of this normalization.

Given a sequence of polynomial constraints 
\lstinline{g $\defcoq$  $[g_1, \dots, g_m]$}, a lower bound $\mu_k^-$, an objective
polynomial $\fpop$ and a POP certificate \code{cert_pop} (build with a
Putinar-type certificate and a polynomial remainder $\epspop$),
the fact that a successful check of the certificate entails
nonnegativity of the polynomial is formalized by the following
correctness lemma: 

\begin{lstlisting}
Lemma correct_pop env g $\fpop$ cert_pop $\mu_k^-$:  
g_nonneg env g $\to$ checker_pop g $\fpop$ $\mu_k^-$ cert_pop  $\eqcoq$ true $\to$ 
$\mu_k^-$ $\leq$ $\evalexpr{\fpop}$.
\end{lstlisting}
The way this lemma is stated shows that in our development, we use an environment function \lstinline{env} to
bind positive integers to polynomial real variables. The function \lstinline{$\evalexpr{\cdot}$} maps a polynomial expression to the carrier type \code{R}. 
The function
\code{g_nonneg} explicits the conditions by returning the conjunction of propositions $\evalexpr{g_1} \geq 0 \wedge \dots \wedge \evalexpr{g_m} \geq 0$.

In the sequel of this section, we describe the data structure and the auxiliary lemmas that allow to define and prove~\code{correct_pop}.
\subsubsection{Encoding Polynomials}
Checking ring equalities between polynomials requires to provide a type of coefficients. In our current setting, we choose \code{bigQ}, the type of arbitrary-size rationals. The ring morphism \code{IQR} injects rational coefficients into the carrier type \code{R} of $\coq$ classical real numbers. For the sequel, we also note:
\begin{lstlisting}
Notation "[ c ]" $\defcoq$ IQR c.
\end{lstlisting} 
We use two types of polynomials: \code{PExpr} is for uninterpreted ring expressions while \code{PolC} is for uninterpreted normalized polynomial expressions:

\begin{lstlisting}
 Inductive PExpr : Type $\defcoq$
  | PEc   : bigQ $\to$ PExpr
  | PEX   : positive $\to$ PExpr
  | PEadd : PExpr $\to$ PExpr $\to$ PExpr
  | PEsub : PExpr $\to$ PExpr $\to$ PExpr
  | PEmul : PExpr $\to$ PExpr $\to$ PExpr
  | PEopp : PExpr $\to$ PExpr
  | PEpow : PExpr $\to$ N $\to$ PExpr.
\end{lstlisting} 

\begin{lstlisting}
 Inductive PolC : Type $\defcoq$
  | Pc   : bigQ $\to$ PolC
  | Pinj : positive $\to$ PolC $\to$ PolC
  | PX   : PolC $\to$ positive $\to$ PolC $\to$ PolC.
\end{lstlisting}
The three constructors \code{Pc}, \code{Pinj} and \code{PX} satisfy the following conditions:
\begin{enumerate}
\item The polynomial \code{(Pc c)} is the constant polynomial that evaluates to \code{[c]}.  

\item The polynomial \code{(Pinj i p)} is obtained by shifting the index of \code{i} in the variables of \code{p}. In other words, when \code{p} is interpreted as the value of the ($n - i$) variables polynomial $p(x_1, \dots, x_{n - i})$, then one interprets \code{(Pinj i p)} as the value of $p(x_i, \dots, x_n)$.

\item Let \code{p} (\resp{}\code{q}) represents $p$ (\resp{}$q(x_1, \dots, x_{n - 1})$). Then \code{(PX p j q)} evaluates to $p x_1^j + q(x_2, \dots, x_n)$.
\end{enumerate}
Polynomial expressions can be normalized  via the procedure \lstinline{norm : PExpr  $\to$ PolC}. We note \lstinline{$p_1 \equiv p_2$} the boolean equality test between two normal forms \lstinline{$p_1$} and \lstinline{$p_2$}.

\if{\lstinline{$$}}\fi
The procedure \code{checker_pop} for SOS certificates relies on the correctness lemma \code{norm_eval}:
\begin{lstlisting}
Lemma norm_eval ($p$ : PExpr) ($q$ : PolC) : 
norm($p$) $\equiv$ $q$ $\implies$ $\forall$ env, $\evalexpr{p}$ $\eqcoq$ eval_pol env $q$.
\end{lstlisting}
Here, the function \code{eval_pol} maps a sparse normal form to the carrier type \code{R}.
\subsubsection{Encoding SOS certificates}

We recall that an SOS can be decomposed as $\sigma_j := \sum_{i = 1}^{r_j} \lambda_{i j} v_{i j}^2$.
Each $\sigma_j$ is encoded using a finite sequence of tuples composed of an arbitrary-size rational (of type \code{bigQ}) and a  polynomial (of type \code{PolC}). Then, we build Putinar-type certificates $\sum_{j = 0}^m  \sigma_j(\xb) g_j(\xb)$ of type \code{cert_putinar}, with a finite sequence of tuples, composed of an SOS $\sigma_j$ and a polynomial $g_j$.
Finally, we define POP certificates (object of type \code{cert_pop}) for Problem~\ref{eq:cons_pop} using a Putinar-type certificate and a polynomial remainder $\epspop$.

\if{
the record construction:
\begin{lstlisting}
Record cert_pop $\defcoq$ mk_cert_pop {$\epspop$: PolC;s: cert_putinar}.
\end{lstlisting}
}\fi

\subsubsection{Formal proofs for polynomial bounds}
The coarse lower bound $\epspop^*$ of a polynomial remainder $\epspop$ can be computed inside $\coq$ with the following recursive procedure:
\begin{lstlisting}
Fixpoint lower_bound_0_1 $\epspop$ $\defcoq$ match $\epspop$ with
  | Pc c     $\implies$ min c 0
  | Pinj _ p $\implies$ lower_bound_0_1 p
  | PX p _ q $\implies$ lower_bound_0_1 p $\pluscoq$ lower_bound_0_1 q
end.
\end{lstlisting}
The remainder inequality~\eqref{eq:epspop} can then be proved by structural induction. 

\if{Given a list of hypotheses \code{g}, a lower bound $\mu_k^-$, an objective polynomial $\fpop$ and a POP certificate \code{cert_pop} (build with an SOS certificate \code{sos}, a polynomial \lstinline{$\epspop$} and a sequence of arbitrary-size rationals \lstinline{$\lambda$}) }\fi
Any certificate \code{c} of type \code{cert_putinar} can be mapped to a sparse Horner form with the function \code{toPolC} function, using the sequences \code{g}, \lstinline{$\lambda$} and the environment \lstinline{env}. Since each $g_j$ is nonnegative by assumption and each $\sigma_j$ is an SOS, one can verify easily the nonnegativity of a Putinar-type certificate by checking the nonnegativity of each \lstinline{$\lambda$} element. 
The boolean function \code{checker_pop} verifies that:
\begin{enumerate}
\item each  element of $\lambda$ is nonnegative
\item  \lstinline{norm($\fpop - \mu_k^-$) $\equiv$  (toPolC g $\lambda$ sos) $\pluscoq$ $\epspop$ $\subcoq$ [lower_bound_0_1 $\epspop$]} \newline (\ie{}equality~\eqref{eq:ringpop} is satisfied)
\end{enumerate}

\if{
Then, polynomial nonnegativity can be verified by applying the following correctness lemma: 
\begin{lstlisting}
Lemma correct_pop env g $\fpop$ cert_pop $\mu_k^-$:  
g_nonneg env g $\to$ checker_pop g $\fpop$ $\mu_k^-$ cert_pop  $\eqcoq$ true $\to$ 
$\mu_k^-$ $\leq$ $\evalexpr{\fpop}$.
\end{lstlisting}
}\fi
Then, we can prove Lemma \code{correct_pop}:
\begin{proof}
By assumption, one can apply \code{norm_eval} with \lstinline{$p = \fpop - \mu_k^-$} and \lstinline{$q$ $\eqcoq$ (toPolC g $\lambda$ sos) $\pluscoq$ $\epspop$ $\subcoq$ [lower_bound_0_1 $\epspop$]}. 
We first use the hypothesis \code{(g_nonneg env g)} as well as the nonnegativity of the rationals of the sequence $\lambda$ to deduce that \lstinline{eval_pol env (toPolC g $\lambda$ sos) $\geq$ 0}.
The nonnegativity of \lstinline{eval_pol env ($\epspop$ $\subcoq$ [lower_bound_0_1 $\epspop$])} comes from inequality~\eqref{eq:epspop}. 
\end{proof}

We also define the type~\code{cert_pop_itv} of formal interval bounds certificates for polynomials expressions:
\begin{lstlisting}
Definition cert_pop_itv := (cert_pop * cert_pop).
\end{lstlisting}
Lemma~\code{correct_pop_itv} relates interval enclosure of polynomials with certificates of type~\code{cert_pop_itv}:
\if{\lstinline{$$}}\fi
\begin{lstlisting}
Lemma correct_pop_itv env g $\fpop$ ($i$ : itv) c: 
g_nonneg env g $\to$ checker_pop_itv g $\fpop$ $i$ c $\eqcoq$ true $\to$ $\evalexpr{\fpop}$ $\in$ $i$.
\end{lstlisting}

Here, \code{itv} refers to the type of intervals, encoded using two rational coefficients:
\begin{lstlisting}
Inductive itv : Type $\defcoq$ Itv : bigQ $\to$ bigQ $\to$ itv.
Definition itv01 $\defcoq$ Itv 0 1. (* the interval $[0, 1]$ *)
\end{lstlisting}
Moreover, we denote the lower (\resp{}upper) bound of an interval $i$ by $\underline{i}$ (\resp{}$\overline{i}$). 
We note \lstinline{$\evalexpr{p} \in i$} to state that \lstinline{$\underline{i}$ $\leq$ $\evalexpr{p}$ $\leq$ $\overline{i}$}.

\subsection{Experimental Results}
\label{subsect:popbenchs}
We first recall some Flyspeck related definitions~\cite{halesalgo}:
\small
\[
\begin{array}{rll}
\Delta \xb & := & x_1 x_4 ( x_2 -  x_1 +  x_3  - x_4 +  x_5 +  x_6)
+ x_2 x_5 (x_1  -  x_2 +  x_3 +  x_4  - x_5 +  x_6) \\
& & + x_3 x_6 (x_1 +  x_2  -  x_3 +  x_4 +  x_5  - x_6) 
 - x_3 ( x_2 x_4  + x_1 x_5)  -  x_6 (x_1 x_2  + x_4 x_5) \enspace, \\
\partial_4 \Delta \xb & := & x_1 ( x_2 -  x_1 +  x_3  - 2 x_4 +  x_5 +  x_6) 
 + x_2 x_5 + x_3 x_6 - x_3  x_2 - x_6 x_5  \enspace .
\end{array}
\]
\normalsize
We tested our formal verification procedure on the following polynomial problems, occurring as sub-problems of Flyspeck nonlinear inequalities:
\small
\[
\begin{array}{rl}
\textit{POP1}: (4 \leq x_1, x_2, x_3, x_5, x_6 \leq 2.52^2 \wedge 2.52^2 \leq x_4 \leq 8) \implies &  \partial_4 \Delta \xb  \in  [-40.33, 40.33] \, . \\
\textit{POP2}: (4 \leq x_1, x_2, x_3, x_5, x_6 \leq 2.52^2 \wedge 2.52^2 \leq x_4 \leq 8) \implies  & 4 x_1 \Delta \xb  \in  [2047,  14262] \,.
\end{array}
\]
\normalsize
A preliminary phase consists in scaling the POP to apply the correctness Lemma \code{correct_itv}.
Table~\ref{table:formalpop} shows some comparison results with the $\micromega$ tactic, available inside $\coq$. While performing the proof-search step, the tactic relies on the external SDP solver {\sc Csdp}. This solver is used to solve another SDP relaxation that is more general than $Q_k$ (Stengle Positivstellensatz~\cite{StenglePositiv}) to find witnesses of unfeasibility of a set of polynomial constraints (see ~\cite{Besson:2006:FRA:1789277.1789281} for more details).
To deal with the numerical errors of {\sc Csdp}, the proof-search ($\ocaml$ libraries) also includes a projection algorithm, which is is performed in such a way that $\epspop = 0$. Thus, the procedure returns a rational SOS certificate that matches exactly $\fpop - \tilde{\mu}_k$, so that the proof-checking consists only in verifying a polynomial equality.

Numerical experiments are performed using the $\coq$ proof scripts of our formalization (available in the $\nlcertify$~\footnote{\url{http://nl-certify.forge.ocamlcore.org/}} software package), on an Intel Core i5 CPU ($2.40\, $GHz). For the timings related to $\nlcertify$, the column ``$t_i$'' refers to the time spent to find the SOS certificates ``externally''  (while solving SDP relaxations and extracting SOS certificates in $\ocaml$) and the column ``$t_f$'' refers to the total verification time (while compiling $\coq$ proof scripts)\footnote{Note that obtaining $t_i$ while using $\micromega$ is possible in practice but would require to modify the $\ocaml$ libraries of the tactic.}.  Notice that for {\em POP2}, we consider the projection of $\Delta \xb$ with respect to the first $n$ coordinates on the box $K$ (fixing the other variables to $6.3504$). 

Table~\ref{table:formalpop} indicates that our tool outperforms the  $\micromega$ decision procedure, thanks to the sparse variant of relaxation $Q_k$ and a simpler projection method. The symbol ``--'' means that the inequality could not be checked by $\micromega$ within one hour of computation.  Problems occur while performing the proof-search step of $\micromega$, as either the projection algorithm fails or the computational cost of the SDP relaxation is too demanding.


\begin{table}[!ht]
\begin{center}
\caption{Comparing our formal POP checker with $\micromega$, }
\begin{tabular}{l|l|cc|c}
\hline
\multirow{2}{*}{Problem} & \multirow{2}{*}{$n$} & \multicolumn{2}{c|}{$\nlcertify$} & $\micromega$\\
& & $t_i$ & $t_f$ & $t_i + t_f$ \\
\hline
{\em POP1}           & 6 & $0.04 \, s$ & $0.16 \, s$ & $18 \, s$ \\
\hline
\multirow{2}{*}{{\em POP2}} & 2 & $0.06 \, s$ & $0.18 \, s$ & $0.72 \, s$ \\
                     & 3 & $0.14 \, s$ & $0.78 \, s$  & --  \\
                     & 6 & $4.8 \, s$ & $26.4 \, s$  & --  \\
\hline
\end{tabular}
\label{table:formalpop}
\end{center}
\end{table}

\if{
\begin{remark}
A part of our work could form the basis of an improved version of $\micromega$:
\begin{enumerate}
\item The modularity of the tactic  $\micromega$ allows to call the external library described in~\cite{Monniaux_Corbineau_ITP2011} to handle degenerate situations (when the SDP formulation of the POP is not strictly feasible). We interfaced our solver with this library and successfully solved small size instances of POP. The formal verification is much slower than the informal procedure (see the results presented in Section~\ref{sec:templates_bench}). As an example, for the {\em MC} problem,
it is $36$ times slower to generate exact SOS certificates and $13$
times slower to prove its correctness in $\coq$. This track was not further pursued, since computation were faster with the algorithms described in this section. However, we mention that this library could handle the formal verification of unconstrained POP, which is not possible with our current implementation.
\item The relaxation based on Stengle Positivstellensatz can be replaced by the sparse variant described in~\cite{Waki06sumsof}.
\end{enumerate}
\end{remark}
}\fi

\section{Formal Semialgebraic Optimization}
\label{sect:formalfsa}
We can now build on the work of the previous section in order to
extend the framework to obtain also formal bounds for semialgebraic optimization problems:
\begin{equation}
\label{eq:cons_fsa}
\fsa^*  :=  \inf_{\xb \in K} \fsa (\xb) \enspace,
\end{equation}
where $\fsa \in \setA$ and $K := \{\xb \in \R^n  :  g_1(\xb) \geq 0, \dots, g_m(\xb) \geq 0 \}$ is a basic semialgebraic set such that the constraints $(g_j)$ satisfy~\eqref{eq:norm_box}.
For the sake of clarity, we explicit only the subset of $\setA$ consisting of arbitrary composition of polynomials with 
 $\sqrt{\cdot}, +, -, \times, /$, whenever these operations are well-defined (neither division by zero nor square root of negative value occur). However, we can deal with the case $\fsa = \max (f_1, f_2)$ by using the identity: $2 \max (f_1, f_2) = f_1 + f_2 + \sqrt{(f_1 - f_2)^2}$. Similar identities exist to handle operations such as $\min (\cdot, \cdot), |\cdot|$. 
 
The function $\fsa$ has a basic semialgebraic lifting; this means that
one adds new ``lifting variables'' in order to get rid of the
non-polynomial functions in $\fsa$ thus reducing the problem to a POP (for more
details, see e.g. \cite{DBLP:journals/siamjo/LasserreP10}). More
precisely, 
we can add auxiliary variables $x_{n + 1},\dots,x_{n + p}$ (lifting variables),
and construct polynomials $ h_1, \dots , h_s \in \R[x_{1},\dots,x_{n + p}]$ defining the semialgebraic set: 
\[\Kpop := \{ (x_{1},\dots,x_{n + p}) \in \R^{n+p} : \xb \in K, 
 h_l(x_{1},\dots,x_{n + p}) \geq 0, \ l=1,\dots,s\} \: ,\]
 such that $\fpop^* := \inf \{ x_{n + p} : (x_{1},\dots,x_{n + p}) \in \Kpop\}$ is a lower bound of $\fsa^*$. 
 
Now, we explain how to implement this procedure in a formal setting.
\subsection{Data Structure for Semialgebraic Certificates}
\label{sect:coqfsa}
The inductive type \code{cert_sa} represents interval bounds certificates for semialgebraic optimization problems:
{  \small
\begin{lstlisting}
Inductive cert_sa : Type $\defcoq$
| Poly  : PExpr $\to$ itv $\to$ cert_pop_itv $\to$ cert_sa
| Fadd  : cert_sa $\to$ cert_sa  $\to$ itv $\to$ cert_pop_itv $\to$ cert_sa
| Fmul  : cert_sa $\to$ cert_sa $\to$ itv $\to$ cert_pop_itv $\to$ cert_sa
| Fdiv  : cert_sa $\to$ cert_sa  $\to$ itv $\to$ cert_pop_itv $\to$ cert_sa
| Fopp  : cert_sa $\to$ itv $\to$ cert_sa
| Fsqrt : cert_sa $\to$ itv $\to$ cert_sa.
\end{lstlisting}
}
Note that the constructor \code{Poly} takes a type \code{PExpr} object
as argument to represent the polynomial components of the function
$\fsa$. The other constructors correspond to the various ways of
building elements of $\setA$. Even though this certificate data-structure may look heavy-weighted, all constructors are required for verification purpose: for instance \code{Fadd} is mandatory to check the nonnegativity of a function such as $\sqrt{p} + q$, for polynomials $p$ and $q$.
Each constructor takes a formal interval bound $i$ and a certificate
\code{c} of type \code{cert_pop_itv} (as defined in
section~\ref{sect:formalpop}) as arguments. In the sequel, we explain how to ensure that $i$ is a valid interval enclosure of $\fsa$ by checking the correctness of \code{c}.
 
\if{
\begin{remark}
Here, we simplify the description of \code{cert_sa} for the sake of clarity. In our development, some constructors (\eg{}\code{Fadd}, \code{Fdiv}) take two intervals (denoted $i$ and $\isos$) as arguments. 
The interval $i$ refers to the result of an interval arithmetic operation, while $\isos$ is a formal interval bound of $f$, whose correctness is ensured with SOS certificates.
This second interval is tighter ($\isos \subseteq i$) but computing $i$ is mandatory for scaling purpose. 
\end{remark}
}\fi

\begin{example}[\thetheorem\ (from Lemma$_{9922699028}$ Flyspeck)]
\label{ex:fsa}
~\newline From the two multivariate polynomials $p(\xb) := \partial_4 \Delta \xb$ and $q(\xb) := 4 x_1 \Delta \xb$, we define the semialgebraic function $r(\xb) := p(\xb) / \sqrt{q(\xb)}$ over $K := [4, 2.52^2]^3 \times [2.52^2, 8] \times [4, 2.52^2]$. 
Using the procedure described in Section~\ref{sect:formalpop}, one obtains a formal interval $i_p$ (\resp{}$i_q$) enclosing the range of $p$ (\resp{}$q$), certified by an SOS certificate $c_p$ (\resp{}$c_q$). We also derive an interval enclosure $i_\surd := [m_7, M_7]$ for $\sqrt{q}$ and then build the following terms:
\begin{lstlisting}
Definition p := Poly $p$ $i_p$ $c_p$. Definition q := Poly $q$ $i_q$ $c_q$.
Definition sqrtq := Fsqrt q $i_{\surd}$.
Definition r := Fdiv p sqrtq $i_r$ $c_r$. 
\end{lstlisting}
The SOS certificate $c_r$ allows to prove that $i_r$ is a correct interval enclosure of $r$. 
\end{example}

The interpretation of~\code{cert_sa} objects is straightforward, using the evaluation procedure for polynomial expressions. Thus, we also note \lstinline{$\evalexpr{f}$} the interpretation of the semialgebraic certificate $f$, which returns the expression of the semialgebraic function $\fsa$.
A procedure \lstinline{$\liftcoq$} returns the index \code{v} of the lifting variable which represents $\fsa$. 
When $\fsa$ is a polynomial, \lstinline{$\liftcoq$} returns the number of variables involved in $\fsa$. The value is incremented when $\fsa$ is either a division or a square root.
\begin{lstlisting}
Fixpoint $\liftcoq$ $f$ v $\defcoq$
  match $f$ with
  | Poly $p$ _ _ $\implies$ v   | Fopp $a$ _ $\implies$ $\liftcoq$ $q$ v
  | Fdiv $f_1$ $f_2$ _ _ $\implies$ $\liftcoq$ $f_2$ ($\liftcoq$ $f_1$ v) + 1
  | Fsqrt $q$ _ $\implies$  $\liftcoq$ $q$ v + 1
  | Fadd $f_1$ $f_2$ _ _ | Fsub $f_1$ $f_2$ _ _ | Fmul $f_1$ $f_2$ _ _ $\implies$ $\liftcoq$ $f_2$ ($\liftcoq$ $f_1$ v)
  end.
\end{lstlisting}
Then, two procedures are mandatory to reduce Problem~\eqref{eq:cons_fsa} into a polynomial optimization problem $\min_{\xb \in \Kpop} \fpop$. The function \code{obj} derives the objective polynomial $\fpop$, while \code{cstr} returns a list of polynomials $(h_l)$ defining $\Kpop$. 

\begin{lstlisting}
Fixpoint obj $f$ v : PExpr := 
  match $f$ with
   | Poly $p$ _ _ $\implies$ $p$
   | Fopp $q$ _ $\implies$ - (obj $q$ v)
   | Fadd $f_1$ $f_2$ _ _ $\implies$ (obj $f_1$ v) + (obj $f_2$ ($\liftcoq$ $f_1$ v))
   | Fmul $f_1$ $f_2$ _ _ $\implies$ (obj $f_1$ v) * (obj $f_2$ ($\liftcoq$ $f_1$ v))
   | Fsub $f_1$ $f_2$ _ _ $\implies$ (obj $f_1$ v) - (obj $f_2$ ($\liftcoq$ $f_1$ v))
   | Fdiv $f_1$ $f_2$ $i$ _ $\implies$ 
            scale_obj (PEX ($\liftcoq$ $f_2$ ($\liftcoq$ $f_1$ v) + 1)) $i$
   | Fsqrt $q$ $i$ $\implies$ scale_obj (PEX ($\liftcoq$ $q$ v + 1)) $i$
  end.
\end{lstlisting}
Given a polynomial $p$ and an interval $i := [m, M]$, the result of (\code{scale_obj} $p$ $i$) is $(M - m) p + m$.

\begin{lstlisting}
Fixpoint cstr $f$ v : seq PExpr := 
  match $f$ with 
   | Poly $p$ _ _ $\implies$ [::]
   | Fopp $q$ _ $\implies$ cstr $q$ v
   | Fadd $f_1$ $f_2$ _ _ | Fsub $f_1$ $f_2$ _ _ | Fmul $f_1$ $f_2$ _ _ => cstr $f_1$ v ++ cstr $f_2$ (var $f_1$ v)
   | Fdiv $f_1$ $f_2$ _ _ $\implies$ cstr $f_1$ v ++ cstr $f_2$ (var $f_1$ v) ++ [::PEsub (PEmul (obj $f_2$ (var $f_1$ v)) (obj $f$ v)) (obj $f_1$ v); PEsub (obj $f_1$ v) (PEmul (obj $f_2$ (var $f_1$ v)) (obj $f$ v))]
   | Fsqrt $q$ _ $\implies$ cstr $q$ v ++ [::PEsub (PEmul (obj $f$ v) (obj $f$ v)) (obj $q$ v) ; PEsub (obj $q$ v) (PEmul (obj $f$ v) (obj $f$ v))]
  end.
\end{lstlisting}

\begin{example}
\label{ex:fsanext}
Applying the function \lstinline{$\liftcoq$} to the six dimensional functions $q$ and $r$ of Example~\ref{ex:fsa} yields \lstinline{$\liftcoq$ sqrtq $\eqcoq$ $7$} and \lstinline{$ \liftcoq$ r $\eqcoq$ $8$}. 
Then, \lstinline{obj sqrtq} returns the polynomial $(M_7 - m_7) \ x_7 + m_7$ and \lstinline{obj r} returns $(M_8 - m_8) \ x_8 + m_8$. 
The interval $[m_8, M_8]$ is obtained using formal interval arithmetic division $i_p \: \hat{/} \: i_\surd$.
This scaling procedure allows to use the function \code{lower_bound_0_1} since one can prove that $x_7, x_8 \in [0, 1]$. Here \lstinline{cstr sqrtq} returns the finite sequence of polynomials $l_\surd := [h_1; h_2; h_3; h_4]$, with $h_1 := [(M_7 - m_7) \ x_7 + m_7]^2 - q$, $h_2 := - h_1$, $h_3 := x_7$ and $h_4 := 1 - x_7$. Next, \lstinline{cstr r} returns the concatenation of $l_\surd$ with $l_r := [h_5; h_6; h_7; h_8]$, where $h_5 := [(M_8 - m_8) \ x_8 + m_8] [(M_7 - m_7) \ x_7 + m_7] - p$, $h_6 := - h_5$, $h_7 := x_8$ and $h_8 := 1 - x_8$.
\end{example}

\if{
We define the inclusion relation \lstinline{(Itv a b)  $\subseteq$ (Itv a' b')} whenever $a' \leq a$ and $b \leq b'$.
In the sequel, we use the basic operations of interval arithmetics.

\begin{itemize}
\item Addition: \lstinline{Itv $a_1$ $b_1$ $\hat{+}$ Itv $a_2$ $b_2$ $\eqcoq$ Itv ($a_1 + a_2$) ($b_1 + b_2$)}
\item Opposite: \lstinline{$\hat{-}$ Itv $a$ $b$ $\eqcoq$ Itv $\subcoq a$ $\subcoq b$} and subtraction: $i_1 \hat{-} i_2 = i_1 \hat{+} (\hat{-} i_2)$
\item Multiplication: \lstinline{Itv $a_1$ $b_1$ $\hat{\times}$ Itv $a_2$ $b_2$ $\eqcoq$ Itv ($\min C$) ($\max C$)}, where $C := \{ a_1 b_1, a_2 b_1, a_1 b_2, a_2 b_2 \}$
\item Inverse: \lstinline{$\hat{/}$ Itv $a$ $b$ $\eqcoq$ Itv $1 / a$ $1 / b$} and division: $i_1 \hat{/} i_2 = i_1 \hat{\times} (\hat{/} i_2)$
\item Square: \lstinline{(Itv $a$ $b$)$^{\hat{2}}$ $\eqcoq$ Itv $(\max \{0, a, -b \})^2$ $(\max \{- a, b \})^2$}
\end{itemize}
}\fi

\subsection{Formal Interval Bounds for Semialgebraic Functions}
\label{subsect:formalfsa}

Now, we introduce the function \code{checker_sa} built on top of \code{checker_pop_itv}, which checks recursively the correctness of certificates for semialgebraic functions.  
For the sake of simplicity and to stay consistent with Example~\ref{ex:fsanext}, we only present the result of the procedure for the constructors \code{Poly}, \code{Fdiv} and \code{Fsqrt}. We use the inclusion relation: $[a, b]\subseteq [a', b']$ whenever $a' \leq a$ and $b \leq b'$, the formal interval arithmetic square: $\sq \: [a, b] := [(\max \{0, a, -b \})^2, (\max \{- a, b \})^2]$, as well as the interval positivity: $[a, b] \: \hat{>} \: 0$ whenever $a > 0$.
\begin{lstlisting}
Fixpoint checker_sa g $f$ : bool $\defcoq$ 
  match $f$ with 
  | Poly $p$ $i$ c $\implies$ checker_pop_itv g $p$ $i$ c
  | Fdiv $f_1$ $f_2$ $i$ c $\implies$ 
      checker_sa g $f_1$ && checker_sa g $f_2$ && $0$ $\notin$ ($\itvsos$ $f_2$) 
   && checker_pop_itv (g ++ cstr $f$) (obj $f$) $i$ c
  | Fsqrt $q$ $i$ $\implies$ checker_sa g $q$ &&  
    $\itvsos$ $q$ $\hat{>}$ 0 && $\underline{i} < \overline{i}$ &&  ($\itvsos$ $q$) $\subseteq$  $\sq$ $i$
  ...
  end. 
\end{lstlisting}
Recall that our goal is to prove the correctness of a lower bound $\mu^-$  of $\min_{\xb \in K} \fsa (\xb)$ (\resp{}an upper bound $\mu^+$ of $\max_{\xb \in K} \fsa (\xb)$). Then, one can apply the following correctness lemma to state that the interval $i := [\mu^-, \mu^+]$ (returned by \lstinline{$\itvsos$ $f$}) is a valid enclosure of $\fsa$ over $K$ whenever one succeeds to check the certificate $f$.
\begin{lstlisting}
Lemma correct_fsa env g $f$ v : g_nonneg env g $\to$
checker_sa g $f$ $\eqcoq$ true $\to$ $\evalexpr{f}$ $\in$ ($\itvsos$ $f$).
\end{lstlisting}
\begin{proof}
By induction over the structure of semialgebraic expressions.
\end{proof}

\begin{example}[\thetheorem\ (Formal bounds for the function of Example~\ref{ex:fsa})]
\label{ex:fsaend}
~\newline
Continuing Example~\ref{ex:fsanext}, one considers the POP:
\[\min_{\xb, x_7, x_8} \{(M_8 - m_8) \ x_8 + m_8) : \xb \in K, h_1(\xb, x_7, x_8) \geq 0, \dots, h_8 (\xb, x_7, x_8) \geq 0 \} \: ,\] 
to bound from below the function $r(\xb) := p(\xb) / \sqrt{q(\xb)}$. Solving this POP using the second order SDP relaxation $Q_2$ yields the lower bound $\mu_2^- = -0.618$. Similarly, one obtains the upper bound $\mu_2^+ = 0.892$. The procedure \code{checker_sa} calls the function \code{checker_pop_itv} to prove the correctness of the interval bounds $i_p$, $i_q$ (as detailed in Section~\ref{subsect:popbenchs}) and $i_r := [\mu_2^-, \mu_2^+]$. The total running time of this formal verification in $\coq$ is about $200 s$.  Adding the bit-size of all rational coefficients involved in this certificate yields a total of about 667 kbit. About $90 \%$ of the CPU time is spent verifying the correctness  of SOS certificates, that is checking polynomial equalities with the \code{ring} tactic. 

The corresponding proof script is available in the $\nlcertify$ package~\footnote{the file \code{coq/fsacertif.v} in the archive at~\url{https://forge.ocamlcore.org/frs/?group_id=351}}.
\end{example}

\section{Certified Bounds for Multivariate Transcendental Functions}
\label{sect:transc}

We now consider an instance of Problem~\eqref{eq:f}. We identify the objective function $f$ with its
abstract syntax tree $t$, whose leaves are
semialgebraic functions (see Section~\ref{sect:formalfsa}) and other nodes are either basic binary
operations ($+$, $\times$, $-$, $/$) or belong to the set $\setD$ of unary transcendental functions ($\sin$, \etc{}).
We first recall how to handle these unary functions using maxplus approximations.

\subsection{Maxplus Approximations for Univariate Semiconvex Transcendental Functions}
\label{subsect:maxplus}
We consider transcendental functions which are twice differentiable. Thus, the restriction of $r \in \setD$ to any closed interval $I$ is $\gamma$-semiconvex for a sufficiently large $\gamma$, \ie{}the univariate function $g := r + \frac{\gamma}{2} |\cdot|^2$ is convex on $I$ (for more details on maxplus approximation, we refer the interested reader to~\cite{agk04, mceneaney-livre}). Using the convexity of $g$, one can always find a constant $\gamma \leq \sup_{b\in I} -r''(b)$ such that for all $b_i \in I$:
\begin{align}
\label{eq:maxplus}
\forall b \in I, \quad r (b)  \geq \parab_{b_i}^-(b):=  -\frac{\gamma}{2} (b-b_i)^2 +r'(b_i) (b - b_i) + r (b_i) \enspace .
\end{align}
Note that the choice $\gamma = \sup_{b\in I} -r''(b)$ is always valid.
By selecting a finite subset of control points $B \subset I$, one can bound $r$ from below using a maxplus under-approximation:
\begin{align}
\label{eq:maxplusmax}
\forall b \in I, \quad r (b)  \geq \max_{b_i \in B} \parab_{b_i}^-(b) \enspace .
\end{align}

\begin{example}
\label{ex:swf}
Consider the function $f := \sum_{i=1}^{n } x_i \sin (\sqrt{x_i})$ defined over $[1, 500]^n$ (Modified Schwefel Problem~\cite{Ali:2005:NES:1071327.1071336}) and the finite set $\{b_1, b_2, b_3\}$ of control points, with $b_1 := 135$, $b_2 := 251$, $b_3 := 500$.
For each $i=1,\dots,n$, consider the sub-tree
$\sin(\sqrt{x_i})$. 
First, we get the equations of
$\parab_{b_1}^-$, $\parab_{b_2}^-$ and $\parab_{b_3}^-$, which are three under-approximations of the 
function $b \mapsto \sin(\sqrt{b})$ on the real interval $I := [1, \sqrt{500}]$. Similarly we
obtain three over-approximations $\parab_{b_1}^+$, $\parab_{b_2}^+$ and
$\parab_{b_3}^+$ (see Figure~\ref{fig:sin_sqrt3}). Then, we obtain the under-approximation $t_i^- :=
\max_{j \in \{1, 2, 3\}} \{ \parab_{b_j}^- (x_i)\}$ and the
over-approximation $t_i^+ := \min_{j \in \{1, 2, 3\}}
\{ \parab_{b_j}^+ (x_i)\}$. 
\begin{figure}[!t]
\begin{center}
\includegraphics[scale=1]{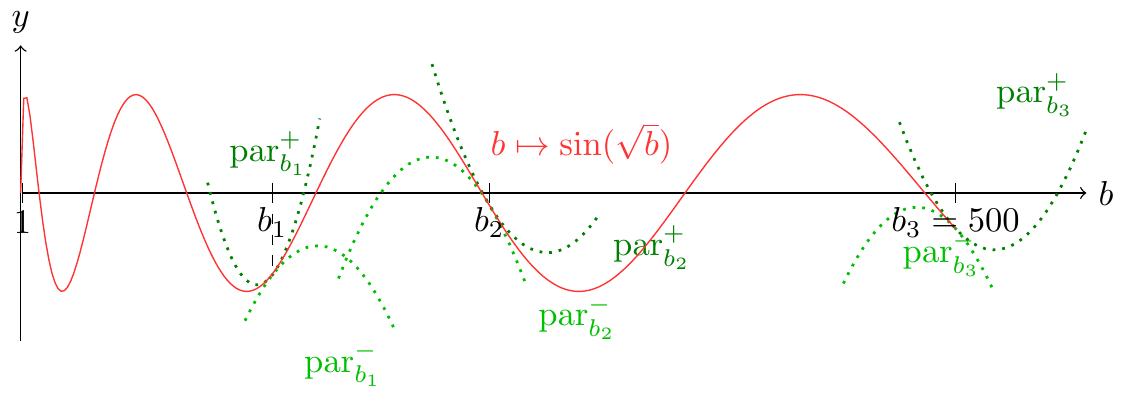}
\caption{Template Maxplus Semialgebraic Approximations for $b \mapsto \sin(\sqrt{b})$: $\max_{j \in \{1, 2, 3\}} \{ \parab_{b_j}^- (x_i)\} \leq  \sin{\sqrt{x_i}} \leq \min_{j \in \{1, 2, 3\}} \{ \parab_{b_j}^+ (x_i)\} $}\label{fig:sin_sqrt3}
\end{center}
\end{figure}
\end{example}

\subsection{The Nonlinear Maxplus Template Method}
\label{subsect:template}

Our main algorithm $\templateapprox$ (Figure~\ref{alg:templateapprox}) is based on a previous method of the authors~\cite{victormpb}, in which the objective function is bounded by means of semialgebraic functions. For the sake of completeness, we first recall the basic principles of this method.

Given a function represented by an abstract tree $t$, semialgebraic lower and upper approximations $t^-$ and $t^+$ are computed by induction.
If the tree is reduced to a leaf, \ie{}$t \in \setA$, we set $t^- = t^+ := t$. If the root of the tree corresponds to a binary operation $\bop$ with children $c_1$ and $c_2$, then the semialgebraic approximations $c_1^-$, $c_1^+$ and $c_2^-$, $c_2^+$ are  composed using a function $\composebop$ to provide bounding approximations of $t$. Finally, if $t$ corresponds to the composition of a transcendental (unary) function $r$ with a child $c$, we first bound $c$ with semialgebraic functions $c^+$ and $c^-$. We compute a lower bound $c_m$ of $c^-$ as well as an upper bound  $c_M$ of $c^+$ to obtain an interval $I := [c_m, c_M]$ enclosing $c$. Then, we bound $r$ from below and above by computing parabola at given control points with a function called $\buildpar$, thanks to the semiconvexity properties of $r$ on the interval $I$ (e.g. the functions $r^- := \max_{j \in \{1, 2, 3\}} \{ \parab_{b_j}^- \}$ and $r^+ := \min_{j \in \{1, 2, 3\}} \{ \parab_{b_j}^+\}$ from Example~\ref{ex:swf}).
These parabola are composed with $c^+$ and $c^-$, thanks to a function denoted by $\composeapprox$ (Line~\lineref{line:compose_template}). 

At the end (Line~\lineref{line:minsa}), we call the function $\minsa$ (\resp{}$\maxsa$) which determines lower (\resp{}upper) bounds of the approximation $t^-$ (\resp{}$t^+$) using techniques presented in Section~\ref{sect:formalfsa}.
\begin{figure}[!ht]
\begin{algorithmic}[1]                    
\Require tree $t$, semialgebraic set $K$, finite sequence of control points $s$
\Ensure lower bound $m$,  upper bound $M$, lower semialgebraic approximation $t^-$, upper semialgebraic approximation $t^+$  
	\If {$t \in \setA$}\label{line:begin_template} $t^- := t$, $t^+ := t$
	\ElsIf {$\bop := \astroot (t)$ is a binary operation with children $c_1$ and $c_2$}
		\State $m_i, M_i, c_{i}^-, c_{i}^+ := \templateapproxjfr{c_i}{K}{s}$ for $i \in \{1,2\}$
		\State $t^-, t^+ := \composebop (c_{1}^-, c_{1}^+, c_{2}^-, c_{2}^+, \bop)$ \label{line:samp_template1}
	\ElsIf {$r := \astroot (t)$ is a univariate transcendental function with a child $c$}	
		\State $m_c$, $M_c$, $c^-$, $c^+ := \templateapproxjfr{c}{K}{s}$ \label{line:samp_template}
		\State $I := [m_{c}, M_{c}]$
		\State $r^-, r^+ := \buildpar (r, I, c, s)$\label{line:buildpar}
		\State $t^-, t^+ :=  \composeapprox (r, r^-, r^+, I, c^-, c^+)$\label{line:compose_template}
	\EndIf\label{line:end_template}
	\State \Return $\minsa(t^-, K)$, $\maxsa(t^+, K)$, $t^-$, $t^+$ \label{line:minsa}
\end{algorithmic}
\caption{$\templateapprox$}
\label{alg:templateapprox}
\end{figure}

\begin{example}
\label{ex:swfnext}
We illustrate the nonlinear maxplus template method with the function $f$ of~Example~\ref{ex:swf}. 
We approximate $f$ with maxplus approximations built with 3 control points (Figure~\ref{fig:sin_sqrt3}), which allows to reduce the modified Schwefel problem to the following POP:
\[
\left\{			
\begin{array}{rl}
\min\limits_{\xb, \zb} 
& - \sum_{i=1}^{n} x_i z_i \\			 
\text{s.t.} & z_i \leq \parab_{b_j}^+(x_i) \:, \quad j \in \{1, 2, 3\}\:,\quad  i = 1,\dots,n \enspace, \\
& \xb \in [1, 500]^n\:, \quad \zb \in [-1, 1]^n \enspace.
\end{array} \right.
\]
\end{example}
\subsection{Formal Verification of Semialgebraic Relaxations}
\label{subsect:interface}
The correctness of the semialgebraic maxplus approximations for univariate functions (computed by the $\buildpar$ procedure, see Figure~\ref{alg:templateapprox} at Line~\lineref{line:buildpar}) is ensured with the $\coqinterval$~\footnote{\url{https://www.lri.fr/~melquion/soft/coq-interval/}} tactic~\cite{Melquiond201214}, available inside $\coq$. As detailed in Section~\ref{subsect:formalfsa}, the procedure \code{checker_sa} validates the interval bounds for semialgebraic problems obtained with the functions $\minsa$ and $\maxsa$ (see Figure~\ref{alg:templateapprox} at Line~\lineref{line:minsa}).

Table~\ref{table:formalfly} presents the results obtained when proving the correctness of lower bounds for semialgebraic relaxations of two 6-variables Flyspeck inequalities. When the bounds obtained with the algorithm $\templateapprox$ are too coarse to certify a given inequality, we perform a branch and bound procedure over the domain $K$. We refer to $\nbb$ as the total number of domain cuts that are mandatory to prove the inequality.  As for Table~\ref{table:formalpop}, the time $t_i$ refers to the informal verification time, required to construct the certificates for semialgebraic functions, while using the optimization algorithm without any call to the $\coq$ libraries. The total verification time $t_f$ is then compared with $t_i$.

\begin{table}[!ht]
\begin{center}
\caption{Formal Bounds Computation Results for Semialgebraic Relaxations of Flyspeck Inequalities}
\begin{tabular}{lcccc}
\hline
Inequality  & $\nbb$  & $t_i$ & $t_f$ & $\frac{t_i + t_f}{t_i}$ \\
\hline
$9922699028$
                              & 39 & $295 \, s$ & $2218 \, s$ & 8.5\\
\hline
$3318775219$ 
                              & 338 & $2285 \, s$ & $19136 \, s$ & 9.4\\
\hline
\end{tabular}
\label{table:formalfly}
\end{center}
\end{table}
Here, the formal verification of SOS certificates is the bottleneck of the computational certification task. Indeed, it is 8.5 (resp.~9.4) times  slower to prove the correctness of semialgebraic lower bounds for the first (resp.~second) inequality.  For both inequalities, it takes about 7\% of the total time to compute bounds with SDP. Note that half this time is spent to compute negative bounds which are not formally verified afterwards. Such non trivial inequalities are also used as test cases for the formal techniques employed by the Flyspeck project (see the row corresponding to the inequality 3318775219 in Table 2 of~\cite{DBLP:journals/corr/abs-1301-1702}) and it takes about the same amount of CPU time to verify them with both methods. For comparison purpose, notice that this ratio between formal and informal verification does not exceed $10$ in our case, while it is about 2000$\sim$4000 in~\cite{DBLP:journals/corr/abs-1301-1702}. Also, as mentioned in~\cite{victormpb}, the number of subdivisions is much smaller than for methods using interval Taylor approximation (9370 for the first inequality and 25994 for the second one\footnote{The data come from the benchmarks file available at~\url{http://code.google.com/p/flyspeck/source/browse/trunk/informal_code/interval_code/qed_log_2012.txt}. The file indicates the number of subdivisions (denoted by ``cells'') for each inequality while running informal verification in C++ with interval arithmetic and directed rounding.}), due to the precision of SOS-based methods.

Table~\ref{table:formalgo} presents the results obtained for examples issued from the global optimization literature (see Appendix B in~\cite{Ali:2005:NES:1071327.1071336} for more details). For each problem, we indicate the number of subdivisions $\nbb$ that are performed to obtain the lower bound $m$ with our method.

\begin{example}
We recall the definition of Problem{(MC)}:\newline
$\min_{\xb \in [-1.5, 4] \times [-3, 3]} \sin (x_1 + x_2) + (x_1 - x_2)^2 - \frac{3}{2} x_1 + \frac{5}{2} x_2 + 1$.

The package $\nlcertify$ contains an example of proof obligations for Problem~{\it MC}~\footnote{the file \code{coq/mccertif.v} in the archive at~\url{https://forge.ocamlcore.org/frs/?group_id=351}} on the box $[- \frac{3}{2}, - \frac{1}{8}] \times [-3, -\frac{9}{4}]$, allowing to assert the following:
\begin{lemma}
\label{th:mc}
$\forall x_1, x_2 \in [0, 1], -1.92 \leq \sin (\frac{11}{8} x_1 - \frac{3}{2} + \frac{3}{4} x_2 - 3) +
   (\frac{11}{8} x_1 - \frac{3}{2} + \frac{3}{4} x_2 - 3)^2  -
    \frac{3}{2} (\frac{11}{8} x_1 - \frac{3}{2}) + \frac{5}{2} (\frac{3}{4} x_2 - 3) +1.$
\end{lemma}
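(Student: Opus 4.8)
The plan is to recognize the right-hand side as an instance of Problem~\eqref{eq:f} whose abstract syntax tree has a single transcendental node, and then to run the nonlinear maxplus template method $\templateapprox$ of Figure~\ref{alg:templateapprox} together with the formal checkers of Sections~\ref{sect:formalpop}--\ref{subsect:formalfsa}. Writing $u := \tfrac{11}{8}x_1 + \tfrac34 x_2 - \tfrac92$, the objective equals $\sin(u) + u^2 - \tfrac{33}{16}x_1 + \tfrac{15}{8}x_2 - \tfrac{17}{4}$, i.e.\ the Problem~(MC) function restricted to the sub-box $[-\tfrac32,-\tfrac18]\times[-3,-\tfrac94]$ and reparametrized so that $(x_1,x_2)$ ranges over the normalized box $[0,1]^2$ of~\eqref{eq:norm_box}. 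The tree is the sum of the affine (hence semialgebraic) polynomial part and one unary node $\sin(c)$ with affine child $c := u$; since $c^- = c^+ = u$, its range over $[0,1]^2$ is the interval $I := [-\tfrac92, -\tfrac{19}{8}]$, obtained immediately (the argument being affine, no SDP is even needed at this leaf).

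Next I would under-approximate $\sin$ on $I$ by the maxplus/parabola construction \eqref{eq:maxplus}--\eqref{eq:maxplusmax}. Choosing a curvature $\gamma \geq \sup_{b\in I}(-\sin''(b)) = \sup_{b\in I}\sin(b)$, for which $\gamma = 1$ is valid, and a finite set of control points $B\subset I$, the function $\buildpar$ returns $r^-(u) := \max_{b_j\in B}\parab_{b_j}^-(u) \leq \sin(u)$. Because $\sin$ occurs with a positive coefficient, $\composeapprox$ replaces $\sin(u)$ by $r^-$ to yield a semialgebraic lower bound $t^-$ of the whole objective. The function $\minsa$ then reduces $\min_{[0,1]^2} t^-$ to a POP in the spirit of Example~\ref{ex:swfnext}: one adds a lifting variable $z$ with the constraints $z \leq \parab_{b_j}^-(u)$ (one per control point, encoding the maximum) and minimizes $z + u^2 - \tfrac{33}{16}x_1 + \tfrac{15}{8}x_2 - \tfrac{17}{4}$ over the box. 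Solving the SDP relaxation $Q_k$ of this POP (a low order, $k=2$, should suffice) with $\sdpa$ produces a numerical Putinar certificate, which I would rationalize and error-correct exactly as in Section~\ref{subsect:hybrid} (conversion to \code{bigQ}, computation of the residual $\epspop$ and of the corrected bound $\mu_k^- = \mu_k + \epspop^*$). The correctness of each parabola inequality $\parab_{b_j}^-(b)\leq \sin(b)$ on $I$ is discharged by the $\coqinterval$ tactic, while the POP certificate is fed to \code{checker_sa}; Lemmas \code{correct_fsa} and \code{correct_pop} then certify inside $\coq$ that $-1.92 \leq t^- \leq \sin(u) + u^2 - \tfrac{33}{16}x_1 + \tfrac{15}{8}x_2 - \tfrac{17}{4}$, which is the claim.

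The approximation itself is not the delicate point here: the true minimum of the objective on this sub-box lies well above $-1.92$, so even a coarse template with one or two control points leaves ample slack. The main obstacle is rather the formal bottleneck emphasized elsewhere in the paper, namely that after rationalizing the floating-point SOS the corrected bound $\mu_k^-$ must still lie below $-1.92$ once $\epspop^*$ is subtracted, and that the dominant cost is the reflexive \code{ring} check of the Putinar equality~\eqref{eq:ringpop} over \code{bigQ} coefficients. Secondary care is needed in choosing $\gamma$ and the control points so that the $\max$-encoding yields a POP whose relaxation is exact enough at order $k=2$, and in matching the normalization so that \code{lower_bound_0_1} applies on $[0,1]^2$.
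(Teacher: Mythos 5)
You follow the same overall route as the paper---enclose the affine argument $u$ in $I=[-\tfrac{9}{2},-\tfrac{19}{8}]$, under-approximate $\sin$ on $I$ by maxplus parabolas whose correctness is discharged by the $\coqinterval$ tactic, then hand the remaining polynomial problem to a rational Putinar certificate checked by \code{correct_pop}---but your POP reduction contains a genuine error: the lifting constraints go the wrong way. Since your lifting variable $z$ enters the minimized objective with a positive coefficient, the constraints must read $z \geq \parab_{b_j}^-(u)$ for each control point, so that minimization presses $z$ down onto $\max_j \parab_{b_j}^-(u)\leq\sin(u)$, every $x\in[0,1]^2$ remains feasible, and the POP value equals the maxplus relaxation. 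Your constraints $z \leq \parab_{b_j}^-(u)$ do not encode the maximum: without a lower box bound on $z$ the POP is unbounded below, and with one (say $z\geq -1$) the minimization drives $z$ to $-1$, so the parabolas contribute nothing; worse, every $x$ at which some $\parab_{b_j}^-(u)$ dips below $-1$ becomes infeasible and is silently removed from the domain. This is not hypothetical here: the paper's explicit parabola satisfies $r^-(-\tfrac{19}{8})\approx -1.48$, so a neighborhood of $(1,1)$---precisely where the polynomial part of the objective is smallest---would be cut away, and the minimum of your POP can then exceed the true minimum of the function; the reduction is unsound as a bounding method, even though for this particular lemma the conclusion happens to survive. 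The sign convention you adapted from Example~\ref{ex:swfnext} is correct there only because $z_i$ carries the negative coefficient $-x_i$ in that objective.

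The fix is immediate (flip the inequalities), but note that the paper's own proof avoids lifting altogether: it uses a \emph{single} explicit parabola $r^-$, proves $r^-(z)\leq\sin z$ on $I$ with the interval tactic, and after substituting the affine argument the claim $-1.92 \leq r^-(u) + u^2 - \tfrac{33}{16}x_1+\tfrac{15}{8}x_2-\tfrac{17}{4}$ is already a polynomial nonnegativity statement over $[0,1]^2$, verified directly by \code{correct_pop}; no lifting variable, no \code{checker_sa} recursion, and none of the relaxation-exactness concerns you raise about the max-encoding arise. Your observation that the problem has ample slack (the true minimum of the stated function lies well above $-1.92$) is correct, and it is exactly why one control point suffices.
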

\begin{proof}
Using the interval tactic, one proves that 

\begin{enumerate}[noitemsep,topsep=2pt,label={(\roman*)}]
\item $\forall z \in [- \frac{9}{2}, - \frac{19}{8}], r^-(z)  \leq \sin z$, where the parabola $r^-$ is defined as follows:
\newline
{\footnotesize $$  r^-(z) := - \frac{68787566775937}{140737488355328} z^2 - \frac{104740403727667521893}{23346660468288651264} z - \frac{145294742556168586619925337}{15491723247053871123529728} \:.$$}
\item By substitution, it follows that $\forall x_1, x_2 \in [0, 1], r^-(\frac{11}{8} x_1 - \frac{3}{2} + \frac{3}{4} x_2 - 3) \leq \sin (\frac{11}{8} x_1 - \frac{3}{2} + \frac{3}{4} x_2 - 3)$.
\item Using a Putinar-type certificate, one checks the nonnegativity of the left-hand side polynomial with the procedure \code{correct_pop}, which yields the desired result. 
\end{enumerate}
\end{proof}
\end{example}
\begin{table}[!t]
\begin{center}
\caption{Formal Bounds Computation Results for Semialgebraic Relaxations of Global Optimization Problems}
\begin{tabular}{lcccccc}
\hline
Problem  & $n$ & $m$ & $\nbb$ & $t_i$ & $t_f$ & $\frac{t_i + t_f}{t_i}$\\
\hline
{\em MC} & $2$ &  $-1.92$&  $17$  & $1.8 \, s$ & $1.9 \, s$ & 2.1 \\
\hline
{\em SWF} & $5$ &  $-2150$&  $78$  & $270 \, s$ & $477 \, s$ & 2.8\\
\hline
\end{tabular}
\label{table:formalgo}
\end{center}
\end{table}

\section{Conclusion}
This framework allows to prove formal bounds for nonlinear optimization problems.  The SOS certificates checker benefits from a careful implementation of informal and formal libraries.
The informal certification tool exploits the system properties of the problems to derive semialgebraic relaxations involving less SOS variables, thus more concise certificates. Our simple projection procedure yields  SOS polynomials with arbitrary-size rational coefficients, that are efficiently checked on the $\coq$ side, thanks to the machine modular arithmetic.
The formal libraries can currently verify medium size semialgebraic certificates  for global optimization problems and inequalities arising in the proof of Kepler's Conjecture. The implementation of polynomial arithmetic still needs some streamlining, as checking ring equalities in $\coq$ remains the bottleneck of our verification procedure. A possible workaround to handle larger size problems is to use polynomials with interval coefficients as in~\cite{brisebarre:ensl-00653460}, so that one could obtain formal bounds without computing the exact polynomial remainder $\epspop$. 
We plan to complete the formal verification procedure
by additionally  automatizing in $\coq$ the proof of correctness of the maxplus semialgebraic
approximations.
A topic of further investigation is to evaluate the resulting improved methodology on all Flyspeck inequalities as well as on the sample of global optimization problems informally solved in~\cite{victormpb}.


\newcommand{\etalchar}[1]{$^{#1}$}


\end{document}

  \if{
\subsubsection*{Numerical Results}

As previously explained, formal lower bounds for semialgebraic problems can be obtained by proving propositions built with \code{fsa_valid}. The most cpu-intensive task is checking propositions built with \code{pop_valid}. 

Table~\ref{table:formalfsa} presents the results obtained when proving the correctness of lower bounds for POP relaxations of Flyspeck\index{Flyspeck} inequalities. The execution time is compared with the informal verification time when using $\templateoptim$ (results from Table~\ref{table:templates_fly}). 

\begin{table}[!ht]
\begin{center}
\caption{Formal Bounds Computation Results for POP relaxations of Flyspeck Inequalities}
\small
\begin{tabular}{lccc}
\hline
\multirow{2}{*}{Inequality}  & \multirow{2}{*}{$\nbb$}  & Informal Nonlinear & Formal Polynomial\\
& &  Optimization Time &  Optimization Time\\
\hline
$9922699028$
                              & 39 & $190 \, s$ & $2218 \, s$ \\
\hline
$3318775219$ 
                              & 338 & $1560 \, s$ & $19136 \, s$ \\
\hline
\end{tabular}
\label{table:formalfsa}
\end{center}
\end{table}

The formal verification of SOS certificates is the bottleneck of the computational certification task. Indeed, it is 22 times slower to prove the correctness of POP lower bounds for such inequalities. 

However, these results do not take into account the time required to check the correctness of the semialgebraic approximations (maxplus) for univariate functions. Validating these approximations could be handled with the $\coqinterval$ tactic~\cite{Melquiond201214}. 
}\fi